\definecolor{dred}{RGB}{220,0,0}
\definecolor{halfgray}{gray}{0.55}
\definecolor{webgreen}{rgb}{0,.5,0}
\definecolor{webbrown}{rgb}{.6,0,0}
\newcommand{\ie}{i.e.,~}
\newcommand{\eg}{e.g.,~}
\newcommand{\E}{\mathbb E}
\newcommand{\R}{\mathbb R}
\renewcommand{\a}{\alpha}
\newcommand{\s}{\sigma}
\newcommand{\g}{\gamma}
\newcommand{\xx}{\mathbf{x}}
\newcommand{\yy}{\mathbf{y}}
\newcommand{\ff}{\mathbf{f}}
\newcommand{\FF}{\mathbf{F}}
\newcommand{\tr}[1]{\mathrm{tr}(#1)}
\newcommand{\prm}{P_{\mathrm{s,1}}}
\newcommand{\prmt}{P_{\mathrm{s,2}}}
\newcommand{\ls}{L^+_{\mathrm{s,1}}}
\newcommand{\lst}{L^+_{\mathrm{s,2}}}
\newcommand{\nv}{n}
\renewcommand{\ne}{m}
\newcommand{\D}{\Delta}
\newtheorem{thm}{Theorem}[section]
\newtheorem{lem}[thm]{Lemma}
\newtheorem{prop}[thm]{Proposition}
\theoremstyle{definition}
\title{\LARGE \bf Minimizing heat loss in DC networks using batteries}
\author{Alessandro Zocca \and Bert Zwart 
\thanks{This work is supported by an NWO VICI grant.}
\thanks{A. Zocca and B. Zwart are with the Centrum Wiskunde \& Informatica, Science Park 123, 1098 XG, Amsterdam, The Netherlands. Email: {\tt\small zocca@cwi.nl}}}%
\begin{document}
\maketitle
\thispagestyle{empty}
\pagestyle{empty}


\begin{abstract}

Electricity transmission networks dissipate a non-negligible fraction of the power they transport due to the heat loss in the transmission lines. In this work we explore how the transport of energy can be more efficient by adding to the network multiple batteries that can coordinate their operations. Such batteries can both charge using the current excess in the network or discharge to meet the network current demand. Either way, the presence of batteries in the network can be leveraged to mitigate the intrinsic uncertainty in the power generation and demand and, hence, transport the energy more efficiently through the network.
We consider a resistive DC network with stochastic external current injections or consumptions and show how the expected total heat loss depends on the network structure and on the batteries operations. Furthermore, in the case where the external currents are modeled by Ornstein-Uhlenbeck processes, we derive the dynamical optimal control for the batteries over a finite time interval.

\end{abstract}


\section{Introduction}
\label{sec1}
The rise of renewable energy has tremendous impact on the redesign of electricity transmission and distribution networks. Renewable energy sources are significantly more variable than traditional energy sources, and are already responsible for 80 percent of the bottlenecks in the European transmission grid~\cite{Y15}. On the other hand, the availability of new technologies creates new opportunities. In particular, the incorporation of storage devices in the network has been suggested to mitigate variability, both from the supply and demand side.

The motivation of this work comes from the observation that energy networks, unlike communication and road traffic networks, lose a nonnegligable amount of the entity they are trying to transport, namely energy. One reason, which is the focal point of this paper, is heat loss. The central question we aim to address in this paper is the following: What is the impact of renewable energy on the amount of heat loss in an energy network, and how can storage devices such as batteries help mitigate this loss? We pursue these questions also in view of the fact that the total heat loss is a good candidate as a global performance metric of the efficiency of transport in a power network.

In a smart-grids context, storage devices have been analyzed in a variety of settings. In a macroscopic setting, storage models have been suggested to deal with uncertainties in wind generation that play a role on longer time scales; in these models the physical network is not modeled explicitly, see for example~\cite{BGK12,GTLB12}. Storage can also be used as tactical tool, exploiting temporal price differences~\cite{CZ15,CGZ14}.


The effect of storage and the optimal storage placement has been studied in the framework of optimal power flows, see \eg \cite{BGTC12,CLTX10}, but mostly assuming non-stochastic generation and demand. In \cite{BCZ16}, it is illustrated by simulation techniques how optimal storage placement can increase network reliability.

Several papers in the literature also investigate optimal policies for storage management, both at grid operator level~\cite{KHT11} as well as at consumer level~\cite{vdVHMS13}.  Other recent papers on control problems in energy systems focus on demand-side management, see e.g. \cite{MBBCE15}. Related papers on storage modeling in energy systems are~\cite{KMTZ11,SEL11,WTLJ10,ZZTL16}.

%

The above-mentioned papers do not model the transport network explicitly, are restricted to simulations, or focus on deterministic scenarios. Our paper proposes a probabilistic analysis that takes the network structure into account. For reasons of mathematical tractability, we model our energy network as DC network, whose underlying equations are the same as in the widely used linear approximation for AC networks, see \eg \cite{CKvM16}. In a resistive network the total amount of heat loss can be described through Thompsons principle~\cite{Whittle2007}, and follows from the analysis of a weighted graph with random sources and sinks.

In general, storage devices can fill up or empty. In this work, we focus on short time scales in the network, up to a few minutes, and make the assumption that such boundary effects do not play a role. The time-scale assumption also allows us to model the input flows as Ornstein-Uhlenbeck processes; for a physical motivation we refer to~\cite{BCH14}.

In the case of a single battery, these assumptions reduce the problem to the total heat loss in a DC network having random sources and sinks with a single slack node. We derive an explicit expression for the expected total heat loss that gives insight in how the optimal location of such a battery depends on the network properties as well as on the mean and variance of the external currents.

For several reasons, it makes sense to use multiple batteries in a network, the most obvious one is, as we show here, that the total amount of heat loss can be reduced significantly. Considering the prototypical case of two batteries, we show in a static setting how the expected heat loss crucially depends on the way the batteries share the load.

We then investigate a control problem in which we allow batteries to share the load dynamically, minimizing the total expected heat loss, subject to a constraint which makes sure that the load does not oscillate from one battery to the other. The sharing between batteries is a deterministic control problem, and as we show, leads to an explicit solution if the input is stationary.

Our paper is organized as follows. The model is described in Section~\ref{sec2}. Sections~\ref{sec3} and~\ref{sec4} describe the cases of one and two batteries, respectively, in a static context. We examine a dynamic load sharing problem for two batteries in Section~\ref{sec5} and present our conclusion in Section~\ref{sec6}.


\section{Model description}
\label{sec2}
In this paper we model an energy network as a weighted graph $(G,w)$, where the graph $G$ is a simple undirected graph $G=(V,E)$ with $|V| =n$ nodes and $|E|=\ne$ edges and $w \in \R_+^m$ is the collection of edge weights.

In the energy network terminology, $(G,w)$ is a \textit{resistive electrical network} with $n$ \textit{buses} and $m$ \textit{transmission lines} and the quantity $w_{i,j}$ is the \textit{conductance} of the transmission line connecting buses $i$ and $j$.

Without loss of generality, we will henceforth assume that $G$ is a connected graph. The \textit{weighted Laplacian matrix} of the graph $G$ is the $n \times n$ matrix $L$ defined as
\[
    L_{i,j} :=
    \begin{cases}
    	\sum_{k \neq i} w_{i,k} 		& \text{ if } i=j,\\
	- w_{i,j}  								& \text{ if } i \neq j.
	\end{cases}
\]
The matrix $L$ satisfies the following identity
\begin{equation}
\label{eq:LADA}
	 L =A^T D A,
\end{equation}
where $A \in \R^{\ne\times \nv}$ is \textit{oriented edge-vertex incidence matrix} (after having chosen any orientation for the edges), namely
\[
  A_{l,k} :=
  \begin{cases}
      0 & \text{ if } l=(i,j), \, k\notin \{i,j\},\\
      1 &\text{ if } l=(i,j), \, k=i,\\
      -1 & \text{ if } l=(i,j), \, k=j,
   \end{cases}
\]
and $D\in\mathbb{R}^{\ne\times \ne}$ is the diagonal matrix with the edge conductances as diagonal elements, \ie $D_{l,l}:=w_{i,j}$, if $l=1,\dots,m$ is the index corresponding to the edge $(i,j)$.

Let $L^+ \in \R^{\nv \times \nv}$ be the \textit{Moore-Penrose pseudoinverse} of the weighted Laplacian matrix $L$. It is well-known that both $L$ and $L^+$ are real, symmetric, positive semi-definite matrices.
Denote by $\xx \in \R^{\ne}$ the vector of current flows across the network edges and by $\yy \in R^{\nv}$ the vector of the potentials in the network nodes. The currents flow in the network $(G,w)$ according to Ohm's law, which can be written in matrix form as
\begin{equation}
\label{eq:x}
	\xx =D A \yy
\end{equation}

Let $\ff \in\R^{\nv}$ be the vector describing the external currents: for every $i=1,\dots,\nv$, $\ff_i$ represents the current generated (if $\ff_i>0$) or consumed (if $\ff_i<0$) at node $i$. We assume that the energy network is balanced, which means that the total power generation equals the total power demand:
\[
	\mathbf{1}^T \ff = \sum_{i=1}^{\nv} \ff_i = 0.
\]
Such a zero-sum external current vector $\ff$ induces potentials $\yy$ in the network nodes given by
\begin{equation}
\label{eq:y}
	\yy = L^+ \ff,
\end{equation}
and, in view of~\eqref{eq:x}, a current vector
\begin{equation}
\label{eq:xf}
	\xx =D A L^+ \ff.
\end{equation}
In this paper, we will be mostly interested in the scenarios where $\ff$ is a random vector (or a multi-dimensional stochastic process), which is useful to model energy networks where both the power demand and generation are stochastic in nature.

\subsection{Total heat loss as quadratic form}
It is well know that when a current $x$ is flowing in a line with conductance $w$ dissipates some power is dissipated as heat at rate $\frac{1}{2} w^{-1} x^2$. Let $H =H(\ff)$ be the total heat loss on the energy network $G$ corresponding to the external current vector $\ff$, that is
\[
	H := \sum_{l \in E} \frac{1}{2} w_{l}^{-1} \xx_{l}^2 =\sum_{(j,k) \in E} \frac{w_{j,k}}{2}  (\yy_j-\yy_k)^2,
\]
where the vectors $\xx$ and $\yy$ depend on the external current vector $\ff$ as illustrated in~\eqref{eq:x} and~\eqref{eq:y}, respectively. This identity in matrix form reads
\begin{equation}
\label{eq:hdxy}
	H= \frac{1}{2} \xx^T D^{-1} \xx = \frac{1}{2} \yy^T L \yy.
\end{equation}

For our analysis, it is convenient to make explicit the dependence of the total heat loss on the external current vector $\ff$. The next proposition shows how $H$ can be written as a quadratic form of the external current vector $\ff$.

In particular, if $\ff$ is a random vector (or a stochastic process), then the total heat loss also becomes a random variable (or a stochastic process, respectively) and we show how to compute the expected total heat loss $h:=\E H$ dissipated by an energy network knowing the network structure and the first two moments of the distribution of the external currents.

\begin{prop}[Total heat loss and its expected value] \label{prop:qf}
Consider the network $(G,w)$ and let $L^+$ be the associated Laplacian pseudo-inverse. Given an external current vector $\ff$, the total heat loss in the network is
\begin{equation}\label{eq:hd}
	H =  \frac{1}{2} \, \ff^T L^+ \ff.
\end{equation}
If the external current vector $\ff$ is a random vector with mean $\bm{\mu}$ and covariance matrix $\bm{\Sigma}$, then the expected total heat loss in the network $(G,w)$ is
\begin{equation}\label{eq:Eh}
	\E H = \frac{1}{2} \, \tr{L^+ \bm{\Sigma}} + \frac{1}{2} \, \bm{\mu}^T L^+ \bm{\mu}.
\end{equation}
\end{prop}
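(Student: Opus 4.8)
The plan is to prove the two claims in sequence, the first being a purely algebraic consequence of the identities already established and the second a standard moment computation for a quadratic form.

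For the deterministic identity~\eqref{eq:hd}, I would start from the expression $H = \frac{1}{2} \yy^T L \yy$ in~\eqref{eq:hdxy} and substitute the induced potentials $\yy = L^+ \ff$ from~\eqref{eq:y}. This gives $H = \frac{1}{2} \ff^T (L^+)^T L L^+ \ff$. Since $L$ is real symmetric positive semi-definite, its Moore--Penrose pseudoinverse $L^+$ is also symmetric, so $(L^+)^T = L^+$, and the middle factor collapses via the defining pseudoinverse relation $L^+ L L^+ = L^+$. This yields $H = \frac{1}{2} \ff^T L^+ \ff$ at once. The one point deserving a remark is that this manipulation is consistent precisely because of the balance assumption $\mathbf{1}^T \ff = 0$: for a connected graph the kernel of $L$ is spanned by $\mathbf{1}$, so the zero-sum vector $\ff$ lies in the range of $L$ and $\yy = L^+ \ff$ is a genuine solution of $L\yy = \ff$ (unique up to an additive constant on the potentials, which does not affect $H$).

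For the expected value~\eqref{eq:Eh}, I would treat $H = \frac{1}{2} \ff^T L^+ \ff$ as a scalar quadratic form and use the trace trick: being a scalar, $\ff^T L^+ \ff$ equals its own trace, and by cyclicity $\ff^T L^+ \ff = \tr{L^+ \ff \ff^T}$. Taking expectations and pulling the deterministic matrix $L^+$ through the linear trace operator gives $\E H = \frac{1}{2}\tr{L^+ \, \E[\ff \ff^T]}$. The final step is to substitute the second-moment identity $\E[\ff \ff^T] = \bm{\Sigma} + \bm{\mu}\bm{\mu}^T$, split the trace by linearity, and recognize $\tr{L^+ \bm{\mu}\bm{\mu}^T} = \bm{\mu}^T L^+ \bm{\mu}$ by cyclicity once more, which reproduces exactly~\eqref{eq:Eh}.

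There is no serious obstacle here, as both parts reduce to standard matrix facts. If anything, the only steps needing care are the symmetry of $L^+$ together with the pseudoinverse identity $L^+ L L^+ = L^+$ in the first part, and the decomposition $\E[\ff\ff^T] = \bm{\Sigma} + \bm{\mu}\bm{\mu}^T$ of the second-moment matrix in the second part; everything else is routine linear algebra and linearity of expectation.
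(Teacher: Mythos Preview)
Your proposal is correct. For~\eqref{eq:hd} you start from the potential form $H=\tfrac{1}{2}\yy^T L\yy$ in~\eqref{eq:hdxy}, whereas the paper starts from the current form $H=\tfrac{1}{2}\xx^T D^{-1}\xx$ and substitutes~\eqref{eq:xf}; both routes converge on the same pseudoinverse identity $L^+ L L^+ = L^+$, so the arguments are essentially equivalent, yours being a touch shorter since it bypasses the factorization $L=A^T D A$. For~\eqref{eq:Eh} the paper simply invokes~\cite[Corollary~3.2b.1]{MP92} on expectations of quadratic forms, while you supply the elementary trace-trick derivation explicitly; your version is self-contained and arguably preferable here, but the mathematical content is identical.
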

\begin{proof}
Combining identities~\eqref{eq:LADA},~\eqref{eq:xf}, and~\eqref{eq:hdxy}, the total heat loss given an external current vector $\ff$ rewrites as
\begin{align*}
	\xx^T D^{-1} \xx
	&= \ff^T (D A L^+)^T D^{-1} D A L^+ \ff \\
	& =  \ff^T (L^+)^T A^T D D^{-1} D A L^+ \ff \\
	& =  \ff^T L^+ A^T D A L^+ \ff \\
	&=  \ff^T L^+ L L^+ \ff = \ff^T L^+ \ff,
\end{align*}
where in the last step of the latter equation we used the well-known identity $L^+ L L^+ = L^+$. If $\ff$ is a random vector, then $H$ becomes a random variable and its expected value can be computed using~\cite[Corollary 3.2b.1]{MP92}, obtaining~\eqref{eq:Eh}.
\end{proof}


\subsection{Effective resistance and Kirchhoff index}

The \textit{effective resistance} $R_{i,j}$ between a pair of nodes $i$ and $j$ is the potential difference that appears across nodes $i$ and $j$ when a unit current source is applied between them:
\[
	R_{i,j} := \yy_i - \yy_j, \quad \text{ with } \, \yy = L^+ (\mathbf{e}_i - \mathbf{e}_j),
\]
or, equivalently,
\[
	R_{i,j} := (\mathbf{e}_i - \mathbf{e}_j)^T L^+ (\mathbf{e}_i - \mathbf{e}_j).
\]
As shown in~\cite{KR93} the effective resistance defines a distance on the set of nodes. Indeed,
\begin{itemize}
\item $R_{i,i}=0$ for every $i=1,\dots,n$ and $R_{i,j} >0$ if $i \neq j$;
\item $R_{i,j}=R_{j,i}$ for every $i \neq j$;
\item For $i,j,k = 1,\dots,n$ the triangle inequality holds:
\begin{equation}
\label{eq:triangular}
	R_{i,j} \leq R_{i,k} + R_{k,j} \quad \forall \, k =1,\dots,n.
\end{equation}
\end{itemize}
For this reason, the effective resistance $R_{i,j}$ is also known in the literature as the \textit{resistance distance} between the nodes $i$ and $j$.
Let $R$ be the \textit{resistance matrix} of the network $(G,w)$, that is the $n \times n$ matrix whose $(i,j)$ entry is $R_{i,j}$. From the properties of the effective resistance, it immediately follows that $R$ is a symmetric matrix with all diagonal entries equal to zero.

%

Consider the network $(G,\mathbf{1})$, in which all the conductances are set equal to $1$.  The \textit{Kirchhoff index} or \textit{total effective resistance} of a graph $G$ is then defined using the resistance matrix $R$ as
\[
	Kf(G):=\frac{1}{2} \sum_{i,j=1}^n R_{i,j}.
\]
The Kirchhoff index was introduced in~\cite{KR93} and in the same paper the authors showed that 
\[
	Kf(G) = n \cdot \tr{L^+}.
\]
From this identity and~\eqref{eq:hd} it immediately follows that if the random external currents $\ff$ in the network $(G,\mathbf{1})$ are independent with zero mean and unit variance, then
\[
	\E H = \frac{Kf(G)}{2n}.
\]
Such an index has been extensively studied in the chemistry literature and it has been calculated for many graphs of interest~\cite{KLG95,LNT99,EJS76}. The Kirchhoff index has also been proposed in~\cite{ESVJK11} as a measure of graph robustness and in ~\cite{GBS08} the authors consider the problem of how to minimize such a quantity in a given graph.


In this work, we need to go beyond the notion of the Kirchhoff index for several reasons. Firstly, the quantity $Kf(G)$ describes the expected total heat loss in a energy network where the current injections are \textit{not} independent, since they need to satisfy the identity $\mathbf{1}^T \ff=0$. Furthermore, we are interested in the study of energy networks with heterogeneous conductances and where the external current injections could possibly have heterogeneous means and variances.


\section{One battery}
\label{sec3}
Consider an energy network with a single battery, which we assume to be placed at node $n$, without loss of generality. For every $i=1,\dots,n-1$ we define $\FF_i(t)$ as the random external current at node $i$ at time $t$. Let $\FF(t):=(\FF_1(t),\dots,\FF_{n-1}(t))$ be the random vector describing the external currents at time $t$ in the whole network and denote by $\bm{\mu}(t) \in \R^{\nv}$ its average and by $\bm{\Sigma}(t) \in \R^{(n-1) \times (n-1)}$ its covariance matrix at time $t$.

We will henceforth assume that the external currents in all $n-1$ nodes are independent, so that at time $t$ the covariance matrix $\bm{\Sigma}(t)$ is a diagonal matrix with diagonal entries that we denote as $\s_1^2(t),\dots,\s_{n-1}^2(t)$.

We model the battery as a \textit{slack node}, \ie we assume that the battery compensates for the total power generation-demand mismatch $\sum_{i=1}^{n-1} \FF_i(t)$, either by charging using the current in excess when $\sum_{i=1}^{n-1} \FF_i(t)>0$ or discharging to meet the demand when $\sum_{i=1}^{n-1} \FF_i(t)<0$. This means that the actual external current vector at time $t$ is the $n$-dimensional random vector $\ff(t)$ defined as
\[
	\ff_i(t) :=
	\begin{cases}
		 \FF_i(t) & \text{ if } i=1,\dots,n-1,\\
		 -\sum_{i=1}^{n-1} \FF_i(t) & \text{ if } i = n.
	\end{cases}
\]
Note that the relation between the vectors $\ff(t)$ and $\FF(t)$ can be rewritten in matrix form as
\begin{equation}
\label{eq:ffoneFF}
	\ff(t) =\prm \FF(t),
\end{equation}
where $\prm$ is the $n \times (n-1)$ matrix defined as
\[
	\prm =
	\left(
	\begin{array}{c}
		I_{n-1}\\
		- \mathbf{1}
	\end{array}
	\right),
\]
with $I_{n-1}$ being the identity matrix of order $n-1$ and $\mathbf{1} \in \R^{n-1}$ the vector with all entries equal to $1$.

The total heat loss at time $t$ for this energy network with one battery, which we denote by $H(t)$, is given by
\begin{align}
	H(t) &=  \frac{1}{2} \, \ff(t)^T L^+ \ff(t) = \frac{1}{2} \, \FF(t)^T \prm^T L^+ \prm \FF(t) \nonumber \\
	& \stackrel{\eqref{eq:ffoneFF}}{=} \frac{1}{2} \, \FF(t)^T \ls \FF(t), \label{eq:hds1}
\end{align}
where the matrix $\ls \in \R^{(n-1) \times (n-1)}$ is defined as
\[
		\ls:= \prm^T L^+  \prm.
\]
The matrix $\ls$ contains all the relevant information about the network structure and conductances as well as the battery location. In view of~\eqref{eq:hds1} and Proposition~\ref{prop:qf}, the expected total heat loss $h(t):= \E H(t)$ at time $t$ is
\[
	h(t) = \frac{1}{2} \,\tr{\bm{\Sigma}(t) \ls} + \frac{1}{2} \,\bm{\mu}(t)^T \ls \bm{\mu}(t).
\]
The next theorem, which is proved in the Appendix, illustrates how the expected total heat loss $h(t)$ at time $t$ depends on the network structure, on the battery location, and on the first two moments of the external current distribution.
\newpage
\begin{thm}[Expected total heat loss with one battery]%
\label{thm:onebattery}
Consider a network $(G,w)$ with associated resistance matrix $R$. If a battery is placed at node $n$, the expected total heat loss $h(t)$ at time $t$ is equal to
\begin{align}
	\hspace{-0.2cm} h(t) =&  \sum_{i=1}^{n-1} \frac{\s_i^2(t)}{2} R_{n,i} + \frac{1}{2} \Big (\sum_{i=1}^{n-1} \mu_i(t) \Big )\Big ( \sum_{i=1}^{n-1} \mu_i(t) R_{n,i} \Big ) \nonumber \\
	& - \frac{1}{2} \sum_{i,j=1,\, i<j}^{n-1} \mu_i(t) \mu_j(t) R_{i,j}. \label{eq:onebattery}
\end{align}
\end{thm}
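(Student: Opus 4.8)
The plan is to identify the entries of the matrix $\ls=\prm^T L^+\prm$ with effective resistances and then substitute the resulting expression into the formula $h(t)=\frac{1}{2}\tr{\bm{\Sigma}(t)\ls}+\frac{1}{2}\bm{\mu}(t)^T\ls\bm{\mu}(t)$ established just above. The starting observation is that, because the first $n-1$ rows of $\prm$ form $I_{n-1}$ and its last row is $-\mathbf{1}^T$, the $i$-th column of $\prm$ is exactly $\mathbf{e}_i-\mathbf{e}_n$, with $\mathbf{e}_i,\mathbf{e}_n\in\R^n$. Hence for $i,j\in\{1,\dots,n-1\}$ one has
\[
	(\ls)_{ij}=(\mathbf{e}_i-\mathbf{e}_n)^T L^+(\mathbf{e}_j-\mathbf{e}_n).
\]

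The key algebraic step, which I expect to be the crux of the argument, is to rewrite this bilinear form in terms of effective resistances. Using $R_{i,j}=L^+_{ii}+L^+_{jj}-2L^+_{ij}$, a short polarization computation gives
\[
	(\ls)_{ij}=\tfrac{1}{2}\big(R_{i,n}+R_{j,n}-R_{i,j}\big),
\]
which also covers the diagonal case $(\ls)_{ii}=R_{n,i}$ since $R_{i,i}=0$. This is the only place where the structure of $L^+$ enters, and everything afterwards is bookkeeping.

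Finally I would substitute. Since $\bm{\Sigma}(t)$ is diagonal, the trace term collapses to $\tr{\bm{\Sigma}(t)\ls}=\sum_{i=1}^{n-1}\s_i^2(t)(\ls)_{ii}=\sum_{i=1}^{n-1}\s_i^2(t)R_{n,i}$, producing the first summand in~\eqref{eq:onebattery}. For the quadratic form I would write $\bm{\mu}(t)^T\ls\bm{\mu}(t)=\sum_{i,j=1}^{n-1}\mu_i(t)\mu_j(t)(\ls)_{ij}$ and split it according to the identity above. The part involving $R_{i,n}+R_{j,n}$ is symmetric under swapping $i$ and $j$, so after accounting for the overall factors it contributes $\frac{1}{2}\big(\sum_i\mu_i(t)\big)\big(\sum_i\mu_i(t)R_{n,i}\big)$, the second summand. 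In the remaining part the diagonal terms vanish because $R_{i,i}=0$, and the full off-diagonal sum equals $2\sum_{i<j}\mu_i(t)\mu_j(t)R_{i,j}$ by symmetry; tracking the factor $\frac{1}{4}$ produces the final summand. Collecting the three contributions gives~\eqref{eq:onebattery}.
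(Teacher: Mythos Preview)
Your argument is correct and the final bookkeeping matches the paper's computation of $\tr{\bm{\Sigma}\ls}$ and $\bm{\mu}^T\ls\bm{\mu}$ line for line. The difference lies in how you obtain the key identity $(\ls)_{ij}=\tfrac12(R_{n,i}+R_{n,j}-R_{i,j})$. The paper proves this as a separate lemma by invoking an external formula for the entries of $L^+$ in terms of effective resistances (their Lemma~\ref{lem:lp}, quoted from~\cite{BCEG09}), then adds a multiple of the all-ones matrix $\mathbf{J}$ to $L^+$ (exploiting $\prm^T\mathbf{J}\prm=\mathbf{0}$) and simplifies the resulting four-term expression in row/column sums of $R$. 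Your route is more elementary: recognising that the $i$-th column of $\prm$ is $\mathbf{e}_i-\mathbf{e}_n$ lets you read off $(\ls)_{ij}=L^+_{ij}-L^+_{in}-L^+_{jn}+L^+_{nn}$ directly, and the polarization identity $R_{i,j}=L^+_{ii}+L^+_{jj}-2L^+_{ij}$ then gives the result in one line without any appeal to the cited entrywise description of $L^+$. The paper's detour through $\mathbf{J}$ is what it reuses verbatim for the two-battery case (Lemma~\ref{lem:L2}), but your polarization argument would handle that case just as cleanly.
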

A special case of interest is the situation where the expected generation-demand mismatch is equal to zero, \ie
\[
	\sum_{i=1}^{n-1} \mu_i(t)=0.
\]
In this scenario the energy network is on average self-balanced, hence the main role of the battery is to shove peaks in generation or demand, as well as to compensate for spatial imbalances in the network in the current production or consumption. In this case, the expected total heat loss simplifies to
\[
	h(t) =  \frac{1}{2} \sum_{i=1}^{n-1} \s_i^2(t) R_{n,i} - \frac{1}{2} \sum_{i,j=1,\, i<j}^{n-1} \mu_i(t) \mu_j(t) R_{i,j}.
\]
Both this formula and~\eqref{eq:onebattery} can be used to determine the optimal placement of a battery in a network. We shall apply this in the next section to a line network, investigating the difference between one and two optimally located batteries. 


\section{Two batteries}
\label{sec4}
We now consider the scenario where in the energy network there are multiple batteries that can coordinate their operations. We focus here the case of two batteries, which we assume to be placed in nodes $n-1$ and $n$, without loss of generality.

The remaining $n-2$ nodes of the network have a stochastic current injections or consumptions. Similarly to the previous section, we denote by $\FF(t)$, $\bm{\mu}(t)$, and $\bm{\Sigma}(t)$ the random external current vector at time $t$, its mean and its covariance matrix, respectively. Furthermore, also here we assume that the $n-2$ external currents are independent, which means that $\bm{\Sigma}(t)$ is a diagonal matrix with entries $\s_1^2(t),\dots,\s_{n-2}^2(t)$.

The two batteries compensate for the total power generation-demand mismatch $\sum_{i=1}^{n-2} \FF_i(t)$ according to the weights $\a$ and $1-\a$, respectively, for some $\a \in \R$ to which we will refer as \textit{load sharing coefficient}. This means that the external currents in the whole network at time $t$ are described by the $n$-dimensional vector $\ff(t)$ defined as
\[
	\ff_i(t) =
	\begin{cases}
		 \FF_i(t) & \text{ if } i=1,\dots,n-2,\\
		 -\a \sum_{i=1}^{n-2} \FF_i(t) & \text{ if } i = n-1,\\
		 -(1-\a) \sum_{i=1}^{n-2} \FF_i(t) & \text{ if } i = n.\\
	\end{cases}
\]
Note that when $\a \in [0,1]$, both batteries simultaneously charge or discharge (since the weights $\a$ and $1-\a$ have the same sign), while they have a specular behavior when $\a \in (-\infty,0) \cup (1,+\infty)$. Note that the relation between the vectors $\ff(t)$ and $\FF(t)$ can be rewritten in matrix form as
\begin{equation}
\label{eq:fftwoFF}
	\ff(t):=\prmt(\a) \FF(t),
\end{equation}
where $\prmt(\a)$ be the $n \times (n-2)$ matrix defined as
\[
	\prmt(\a) =
	\left(
	\begin{array}{c}
		I_{n-2}\\
		-\a \, \mathbf{1}\\
		-(1-\a) \, \mathbf{1}
	\end{array}
	\right).
\]
Let $H(t,\a)$ be the total heat loss in the network at time $t$ that corresponds to the external current vector $\FF(t)$ when the load sharing coefficient of the two batteries is $\a$. Using~\eqref{eq:hd} and~\eqref{eq:fftwoFF}, we can write
\begin{equation}
\label{eq:hds2}
	H(t,\a) = \frac{1}{2} \, \FF(t)^T \lst(\a) \FF(t), 
\end{equation}
where $\lst(\a):= \prmt^T(\a) L^+  \prmt(\a)$ is an $(n-2 )\times (n-2)$ matrix that encodes all the crucial information about the two batteries, their location, and the network properties. 

From~\eqref{eq:hds2} and Proposition~\ref{prop:qf} it follows that when the two batteries use a load sharing coefficient $\a$, the expected total heat loss $h(t,\a) := \E H(t,\a)$ at time $t$ is equal to
\[
	 h(t,\a) = \frac{1}{2} \,\tr{\bm{\Sigma}(t) \lst(\a)} + \frac{1}{2} \, \bm{\mu}(t)^T \lst(\a) \bm{\mu}(t).
\]
The next theorem shows how such a quantity depends on the network structure, on the first two moments of the external current distribution and on the load sharing coefficient $\a$.

\begin{thm}[Expected total heat loss with two batteries]
\label{thm:heat2batteries}
Consider a network $(G,w)$ with associated resistance matrix $R$. If two batteries are placed at nodes $n-1$ and $n$ and use a load sharing coefficient $\a$, the expected total heat loss $h(t,\a)$ at time $t$ is equal to
\begin{align}
	&h(t,\a)  \, = \, \frac{\a^2}{2} \, R_{n-1,n} \, \Big  ( \sum_{i=1}^{n-2} \s_i^2(t) + \Big  (\sum_{i=1}^{n-2} \mu_i(t) \Big )^2 \, \Big ) \nonumber \\
	&   + \frac{\a}{2} \Big (\sum_{i=1}^{n-2} \s_i^2(t) \D_i + \Big (\sum_{i=1}^{n-2} \mu_i(t) \Big ) \Big (\sum_{i=1}^{n-2} \mu_i(t) \D_i \Big ) \Big )  \nonumber\\
	&   + \frac{1}{2} \, \sum_{i=1}^{n-2} \s_i^2(t)  R_{n,i} + \frac{1}{2} \,\Big (\sum_{i=1}^{n-2} \mu_i(t) \Big ) \Big (\sum_{i=1}^{n-2} \mu_i(t) R_{n,i} \Big ) \nonumber \\
	& - \frac{1}{2} \,\sum_{i,j=1, \, i<j}^{n-2} \mu_i(t) \mu_j(t) R_{i,j}, \label{eq:heat2batteries}
\end{align}
where $\D_{i}:=R_{n-1,i} - R_{n,i}-R_{n-1,n} $, for $i=1,\dots,n-2$.
\end{thm}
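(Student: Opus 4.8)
The plan is to compute the matrix $\lst(\a)=\prmt^T(\a)L^+\prmt(\a)$ entrywise in terms of the resistance matrix $R$, and then substitute into $h(t,\a)=\tfrac{1}{2}\tr{\bm{\Sigma}(t)\lst(\a)}+\tfrac{1}{2}\bm{\mu}(t)^T\lst(\a)\bm{\mu}(t)$. The key structural observation is that the $i$-th column of $\prmt(\a)$ is the vector $\mathbf{c}_i:=\mathbf{e}_i-\a\,\mathbf{e}_{n-1}-(1-\a)\,\mathbf{e}_n\in\R^{\nv}$, whose entries sum to $1-\a-(1-\a)=0$. Thus every column is a zero-sum vector, exactly as in the one-battery case where the columns of $\prm$ are the zero-sum vectors $\mathbf{e}_i-\mathbf{e}_n$, so the same mechanism that proves Theorem~\ref{thm:onebattery} applies here.

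First I would record the identity that converts $L^+$ into $R$ on zero-sum vectors. Writing $R_{k,l}=L^+_{k,k}+L^+_{l,l}-2L^+_{k,l}$ and using $\mathbf{1}^T\mathbf{u}=\mathbf{1}^T\mathbf{v}=0$, the diagonal contributions of $L^+$ cancel and one obtains $\mathbf{u}^TL^+\mathbf{v}=-\tfrac{1}{2}\,\mathbf{u}^TR\,\mathbf{v}$ for any zero-sum $\mathbf{u},\mathbf{v}$. Applying this with $\mathbf{u}=\mathbf{c}_i$, $\mathbf{v}=\mathbf{c}_j$ and using $R_{n-1,n-1}=R_{n,n}=0$ together with the symmetry of $R$, the expansion gives
\begin{align*}
[\lst(\a)]_{i,j} = -\tfrac{1}{2}\big(\, & R_{i,j} - \a R_{i,n-1} - (1-\a)R_{i,n} \\
& - \a R_{j,n-1} - (1-\a)R_{j,n} + 2\a(1-\a)R_{n-1,n} \,\big),
\end{align*}
whose diagonal reduces to $[\lst(\a)]_{i,i}=\a R_{i,n-1}+(1-\a)R_{i,n}-\a(1-\a)R_{n-1,n}$.

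Next I would substitute these entries. Since $\bm{\Sigma}(t)$ is diagonal, $\tfrac{1}{2}\tr{\bm{\Sigma}(t)\lst(\a)}=\tfrac{1}{2}\sum_{i=1}^{n-2}\s_i^2(t)[\lst(\a)]_{i,i}$ produces the three $\s_i^2$-terms of~\eqref{eq:heat2batteries}. For the quadratic form $\tfrac{1}{2}\bm{\mu}(t)^T\lst(\a)\bm{\mu}(t)=\tfrac{1}{2}\sum_{i,j}\mu_i(t)\mu_j(t)[\lst(\a)]_{i,j}$, I would collapse the sums over $j$ against $\mu_j(t)$ into the scalar factor $\sum_{i}\mu_i(t)$, and use $R_{i,i}=0$ with symmetry to write $\sum_{i,j}\mu_i(t)\mu_j(t)R_{i,j}=2\sum_{i<j}\mu_i(t)\mu_j(t)R_{i,j}$, which yields the remaining $\mu$-terms.

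The final regrouping is where the definition $\D_i=R_{n-1,i}-R_{n,i}-R_{n-1,n}$ enters: substituting $R_{i,n-1}=\D_i+R_{i,n}+R_{n-1,n}$ is precisely what merges the scattered coefficients of $\a$ into the $\D_i$-terms, while the leftover $\a R_{n-1,n}$ contribution combines with the $-\a(1-\a)R_{n-1,n}$ piece into the single $\a^2R_{n-1,n}$ coefficient (and likewise in the $\mu$ part). This bookkeeping step is the only delicate part of the argument; once the zero-sum identity is in hand the rest is routine algebra, and I do not anticipate any genuine obstacle beyond keeping the regrouping organized.
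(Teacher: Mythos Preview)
Your proposal is correct and follows essentially the same route as the paper: both compute the entries of $\lst(\a)$ in terms of effective resistances by exploiting that each column of $\prmt(\a)$ has zero sum, and then substitute into the trace and quadratic-form pieces of $h(t,\a)$, finishing with the same $\D_i$ regrouping. The only notable difference is in how the entries of $\lst(\a)$ are obtained: the paper invokes Lemma~\ref{lem:lp} (the explicit formula for $L^+_{i,j}$ in terms of $R$) and passes through the auxiliary matrix $B=L^+ + \frac{1}{n^2}Kf(G)\mathbf{J}$ to cancel the constant offset, whereas you use the more direct identity $\mathbf{u}^T L^+ \mathbf{v}=-\tfrac{1}{2}\,\mathbf{u}^T R\,\mathbf{v}$ for zero-sum vectors, derived straight from $R_{k,l}=L^+_{k,k}+L^+_{l,l}-2L^+_{k,l}$. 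Your shortcut avoids the external lemma and the $B$-matrix bookkeeping and arrives at the same formula for $[\lst(\a)]_{i,j}$ (which is exactly Lemma~\ref{lem:L2}); the rest of the argument is identical to the paper's.
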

The proof of this theorem is presented in the Appendix. 

In view of~\eqref{eq:heat2batteries}, $h(t,\a)$ is a second-degree convex polynomial in $\a$ and it is easy to show that $h(t,\a) >0$ for any $\a$ and any choice for the external current parameters. Furthermore, $h(t,\a)$ has a unique minimum as a function of $\a$, which we denote by $\a^*(t)$ and refer to as the optimal load sharing coefficient at time $t$. One can derive that
\[
	\a^*(t) = - \frac{\sum_{i=1}^{n-2} \s_i^2(t) \D_i + \sum_{j=1}^{n-2} \mu_j(t) \sum_{i=1}^{n-2} \mu_i(t) \D_i}{2 \, R_{n-1,n} \Big  (\sum_{i=1}^{n-2} \s_i^2(t) + \Big  (\sum_{i=1}^{n-2} \mu_i(t) \Big )^2 \, \Big )}.
\]
Such an optimal load sharing coefficient depends on the means and variances of the external currents, as well as on the network structure via the effective resistance $R_{n-1,n}$ between the two batteries and the quantities $\D_i$, $i=1,\dots,n-2$.

In the scenario where the network is on average self-balanced, \ie $\sum_{i=1}^{n-2} \mu_i(t)=0$, the optimal load sharing coefficient reads
\[
	\a^*(t) = - \frac{\sum_{i=1}^{n-2} \s_i^2(t) \D_i}{2 \, R_{n-1,n} \, \sum_{i=1}^{n-2} \s_i^2(t)}.
\]
It is easy to prove using~\eqref{eq:triangular} and the definition of $\D_i$ that in this special case the load sharing coefficient $\a^*(t)$ lies in $[0,1]$, so that at any time $t$ the two batteries either both charge or both discharge; note that this is not necessarily the case in the general case described above.



\subsection{Performance improvement: an example}
In this subsection we illustrate the performance gain that can be obtained by using two batteries instead of one. In order to obtain a more transparent result we consider a simple network topology, namely a line network of $n$ nodes with unit conductance edges, and i.i.d.~external currents with zero mean and variance $\sigma^2$.

One can show that the optimal location for a single battery is the central node $s^*=\lfloor \frac{n+1}{2} \rfloor$, while the optimal displacement of two batteries with load sharing coefficient $\a=1/2$ is
\[
	s_1^* =\left \lfloor \frac{n}{4} \right \rfloor, \qquad s_2^*= \left \lfloor \frac{3 n}{4} \right \rfloor +1.
\]
The corresponding expected total heat losses are
\[
	h_1(n) = \frac{n^2 \sigma ^2}{8}-\frac{\sigma ^2}{8}
\quad \text{ and } \quad
	h_2(n)  = \frac{3 n^2 \sigma ^2}{32}-\frac{n \sigma ^2}{8}-\frac{\sigma ^2}{4}.
\]
Hence, by adding a battery we reduce the expected total heat loss by $25\%$ asymptotically in the number of nodes, since
\[
	\frac{h_2(n)}{h_1(n)} \to \frac{3}{4}, \quad \text{ as } n \to \infty.
\]


\section{Two batteries: dynamic control}
\label{sec5}
In the previous section we derived the optimal control for two batteries as function of the network structure and external current parameters. However, in order to compute the optimal load sharing coefficient $\a^*(t)$ at time $t$ one would need to know all these data in real-time, which is not realistic. Furthermore, the time evolution of $\a^*(t)$ could be quite irregular, if the external current parameters changes too rapidly.

To tackle both these issues, we investigate how to obtain a smoother control on a finite time interval $[0,T]$. The time-scale assumption discussed in Section~\ref{sec1} allows us to take a stationary external current processes $\FF(t)$, so that the expected external current vector does not depend on time, \ie $\bm{\mu}(t)=\bm{\mu}$, and the covariance matrix is a diagonal matrix $\bm{\Sigma}$ and does not depend on the time either.

In view of Theorem~\ref{thm:heat2batteries}, the expected total heat loss $h(t)$ using a load sharing coefficient $\a(t)$ at time $t$ rewrites as
\begin{equation}
\label{eq:htabc}
	h(t) = a \, \a(t)^2 + b \, \a(t) + c,
\end{equation}
where $a$, $b$, and $c$ are three time-independent coefficients that depend on the network structure and on the mean and variances of the external current processes, see~\eqref{eq:heat2batteries}.

Let $\g >0$ be a positive real constant and assume that the load sharing coefficient $\a: [0,T] \to \R$ is two times differentiable on $[0,T]$. Consider the variational problem
\begin{equation}
\label{eq:varprob}
		\begin{cases}
	\displaystyle \min_{\a(t)} \int_0^T & a \, \a(s)^2 + b \, \a(s) +c + \gamma (\alpha'(s))^2 \, \mathrm{d}s, \\
	\textup{such that} &\a(0)=d \in \R,
	\end{cases}
\end{equation}
In other words, we would like to find the optimal dynamical load sharing coefficient that minimizes the expected total heat loss, knowing that at time $t=0$ the load sharing coefficient is $\a(0) =d$. The penalty term $\gamma (\alpha'(s))^2$ has been added to the variational problem to obtain an load sharing coefficient $\a(t)$ that evolves smoothly over time, without abrupt changes. 
\begin{thm}[Optimal dynamic control of two batteries]%
\label{thm:control}
The solution of the variational problem~\eqref{eq:varprob} is
\begin{equation}
\label{eq:control}
	\a_{\mathrm{opt}}(s) = -\frac{b}{2a} + \frac{d+\frac{b}{2a}}{ \mathrm{e}^{2 T \sqrt{\frac{a}{\gamma }}}+1} \left (\mathrm{e}^{s \sqrt{\frac{a}{\g}}} +\mathrm{e}^{(2T-s) \sqrt{\frac{a}{\g}}}\right ).
\end{equation}
\end{thm}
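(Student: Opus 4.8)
The plan is to treat \eqref{eq:varprob} as a standard problem in the calculus of variations with a fixed left endpoint and a \emph{free} right endpoint. Writing the integrand as $F(s,\a,\a') = a\,\a^2 + b\,\a + c + \g\,(\a')^2$, any minimizer must satisfy the Euler--Lagrange equation $\partial_\a F - \frac{\mathrm{d}}{\mathrm{d}s}\,\partial_{\a'} F = 0$. Since $\partial_\a F = 2a\,\a + b$ and $\partial_{\a'} F = 2\g\,\a'$, this reduces to the linear second-order ODE
\[
	\a''(s) = \frac{a}{\g}\,\a(s) + \frac{b}{2\g}.
\]

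Because $\a(T)$ is not prescribed, the second boundary condition is not a Dirichlet condition but the \emph{natural} (transversality) condition coming from the boundary term in the first variation, namely $\partial_{\a'} F\big|_{s=T} = 0$, which here gives simply $\a'(T) = 0$. Together with the imposed constraint $\a(0) = d$, these two conditions pin down the solution uniquely.

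Next I would solve the ODE explicitly. Its general solution is $\a(s) = -\frac{b}{2a} + C_1\,\mathrm{e}^{s\sqrt{a/\g}} + C_2\,\mathrm{e}^{-s\sqrt{a/\g}}$, where the constant $-\frac{b}{2a}$ is the unique equilibrium (particular solution) and the two exponentials span the homogeneous part. Imposing $\a(0)=d$ gives $C_1 + C_2 = d + \frac{b}{2a}$, while $\a'(T)=0$ gives $C_2 = C_1\,\mathrm{e}^{2T\sqrt{a/\g}}$. Solving this $2\times 2$ linear system yields $C_1 = (d+\tfrac{b}{2a})/(1+\mathrm{e}^{2T\sqrt{a/\g}})$ and $C_2 = C_1\,\mathrm{e}^{2T\sqrt{a/\g}}$; substituting back and factoring the common coefficient reproduces exactly \eqref{eq:control}.

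The only genuinely delicate point is the identification of the correct endpoint condition at $s=T$: overlooking that the right endpoint is free, and instead fixing $\a(T)$, would yield a different (and wrong) answer, so the transversality condition is the crux of the argument. Finally, since $a>0$ (the coefficient of $\a^2$ in \eqref{eq:heat2batteries} equals $\tfrac12 R_{n-1,n}(\sum_i \s_i^2 + (\sum_i \mu_i)^2) > 0$) and $\g>0$, the integrand $F$ is jointly convex in $(\a,\a')$; hence the Euler--Lagrange stationary point is automatically the unique global minimizer, and no further second-order or sufficiency analysis is required.
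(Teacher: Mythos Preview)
Your proof is correct and slightly more streamlined than the paper's. Both arguments begin with the same Euler--Lagrange equation, but the paper handles the free right endpoint differently: instead of invoking the natural boundary condition $\a'(T)=0$, it parameterizes the one-parameter family of Euler--Lagrange solutions by the unknown initial slope $e=\a'(0)$, evaluates the objective $M(e)=\int_0^T F(s,\a,\a')\,\mathrm{d}s$ explicitly on this family, and then minimizes the resulting quadratic in $e$. Your use of the transversality condition is the textbook approach and reaches the answer more directly; the paper's route is equivalent (the minimizing $e^*$ is precisely the one forcing $\a'(T)=0$) but makes the dependence on the free parameter more explicit at the cost of an extra computation. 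Your closing convexity remark, establishing that the stationary point is the global minimizer, supplies the sufficiency argument that the paper leaves implicit.
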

\begin{proof}
The Euler-Lagrange equation of problem~\eqref{eq:varprob} reads
\[
	2 a \cdot \alpha (s)+b-2 \gamma \cdot  \alpha ''(s)=0.
\]
A general solution of this non-homogeneous second-order linear ordinary differential equation reads
\begin{equation}
\label{eq:solution}
	\a(s) = - \frac{b}{2a} + c_1 \mathrm{e}^{- s \sqrt{\frac{a}{\g}}} + c_2 \mathrm{e}^{ s \sqrt{\frac{a}{\g}}}.
\end{equation}
Assume that the derivative at zero is equal to $\a'(0)=e \in \R$. This, together with the initial conditions $\a(0)=d$, yields
\begin{align}
	c_1(e) &= \frac{b+2 a d-2 e \sqrt{a \gamma }}{4 a}, \nonumber \\
	c_2(e) &= \frac{ b + 2 a d + 2 e \sqrt{a \gamma }}{4 a}. \label{eq:c1c2}
\end{align}
We now minimize over the value of $\a'(0)=e$. Define
\[
	M(e) := \int_0^T a \cdot \a(s)^2 + b \cdot \a(s) +c + \gamma (\a'(s))^2 \, \mathrm{d}s.
\]
From~\eqref{eq:solution} and~\eqref{eq:c1c2} it follows that the function $M(e)$ is a second degree convex polynomial in $e$ and attains its minimum at
\begin{equation}
\label{eq:estar}
	e^* = -\frac{(2 a d+b) \left(\mathrm{e}^{2 T \sqrt{\frac{a}{\gamma }}}-1\right)}{2 \sqrt{a \gamma } \left(\mathrm{e}^{2 T \sqrt{\frac{a}{\gamma }}}+1\right)}.
\end{equation}
Combining~\eqref{eq:solution},~\eqref{eq:c1c2} and~\eqref{eq:estar}, we obtain that the solution of the variational problem is~\eqref{eq:control}.
\end{proof}

\subsection{An example: Ornstein-Uhlenbeck external currents}
In this subsection we will briefly illustrate the scenario where the external currents are modeled by independent stationary Ornstein-Ulhenbeck processes. More specifically, we assume that for every $ i=1,\dots, n-2$ the external current at node $i$ is a Ornstein–Uhlenbeck process, with mean $\mu_i$, variance $\s_i^2$ and mean-reversion rate $\theta_i$, \ie
\[
	d \FF_i(t) = \theta_i (\mu_i - \FF_i(t) ) d t + \s_i^2 \, d W_i(t),
\]
where $(W_i(t))_{i=1}^{n-2}$ are independent standard Brownian motions. 
As illustrated in~\eqref{eq:htabc} the expected total heat loss $h(t)$ in the network when the two batteries use a load sharing coefficient $\a(t)$ at time $t$ can be written as
\[
	h(t) = a \, \a(t)^2 + b \, \a(t) + c,
\]
where, in view of the assumption for the external currents to evolve as Ornstein–Uhlenbeck processes, the coefficients $a,b,c$ can be calculated as
\begin{align*}
	a&=\frac{1}{2} \Big (\sum_{i=1}^{n-2} \frac{\s_i^2}{2\theta_i} +\Big (\sum_{j=1}^{n-2}  \mu_j \Big )^2 \, \Big ) R_{n-1,n}, \\
	b&=\frac{1}{2} \,  \sum_{i=1}^{n-2} \Big (\frac{\s_i^2}{2\theta_i} + \mu_i \Big (\sum_{j=1}^{n-2}  \mu_j \Big ) \Big ) \D_i,\\
	c&= \frac{1}{2} \, \sum_{i=1}^{n-2} \Big(\frac{\s_i^2}{2\theta_i} + \mu_i \Big (\sum_{j=1}^{n-2}  \mu_j \Big ) \Big ) R_{n,i} \\
	& \quad - \frac{1}{2} \,\sum_{i,j=1, \, i<j}^{n-2} \mu_i \mu_j R_{i,j}.
\end{align*}
As an example, we consider a network $(G,\mathbf{1})$ where the graph $G$, illustrated in Figure~\ref{fig:network}, is the IEEE 14-bus network, which has $n=14$ nodes and $m=20$ edges. The two batteries have been placed in the positions highlighted in black (first battery) and white (second battery).
\begin{figure}[!h]
	\centering
	\includegraphics[scale=0.32]{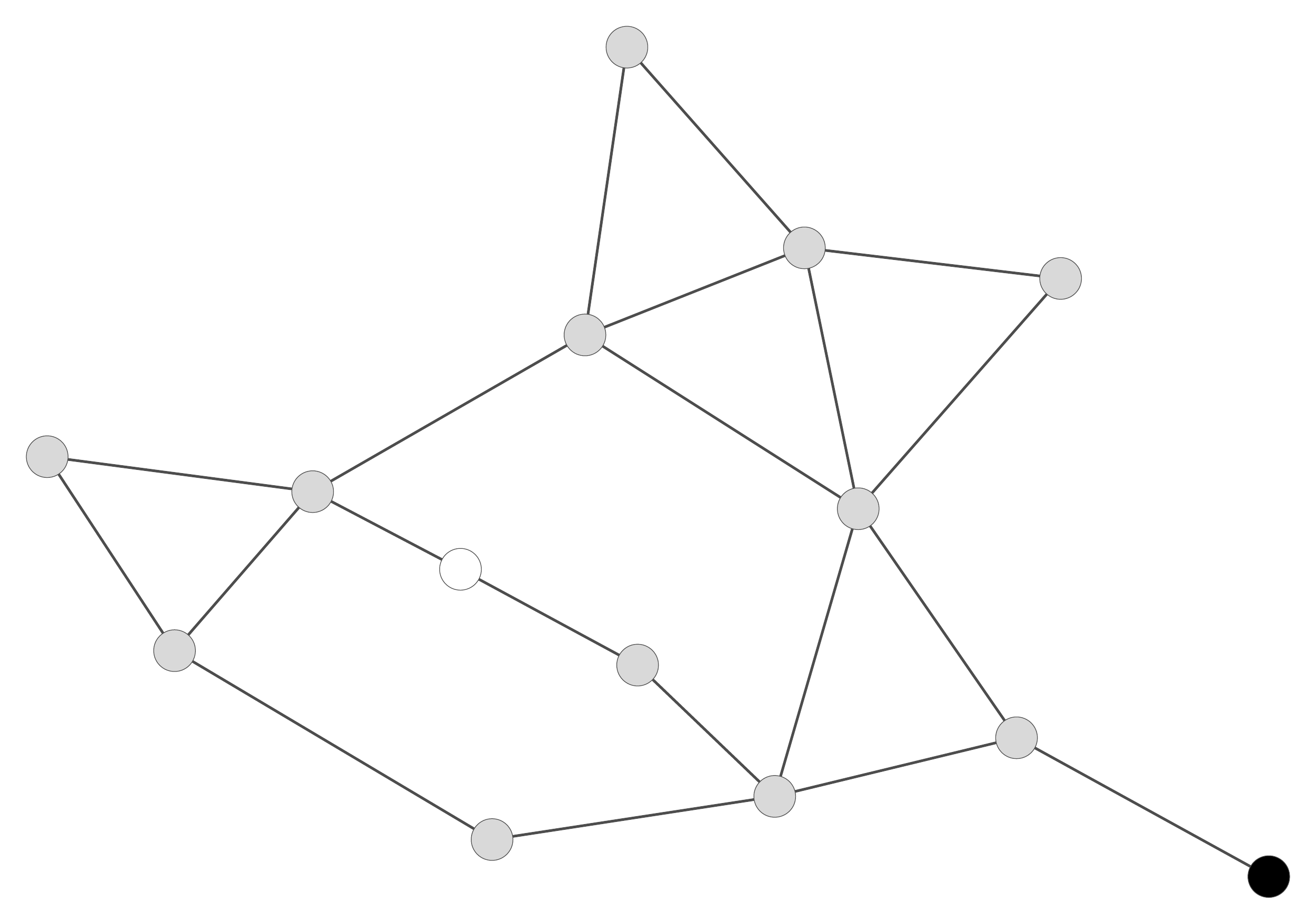}
	\caption{The graph $G$ describing the IEEE 14-bus network topology}
	\label{fig:network}
\end{figure}

We simulate the external currents as independent Ornstein–Uhlenbeck processes with heterogeneous parameters to have qualitative insight of the performance of the optimal smooth control for the load sharing coefficient derived in Theorem~\ref{thm:control}. 

Figure~\ref{fig:controls} shows the evolution over time of $\a_{\mathrm{opt}}(t)$ and that of the pathwise omniscient optimal control $\a_{\mathrm{omn}}(t)$, \ie the one obtained by minimizing the total heat loss at every time \textit{conditionally on the realized external currents}.

Despite knowing only the mean, variance and mean-reversion rate of the external current processes, the smooth control $\a_{\mathrm{opt}}(t)$ performs quite well and indeed the total heat loss tracks quite closely the minimum total heat loss, achievable with the omniscient control, as illustrated in Figure~\ref{fig:heatpath}.
\begin{figure}[!h]
	\centering
	\includegraphics[scale=0.6]{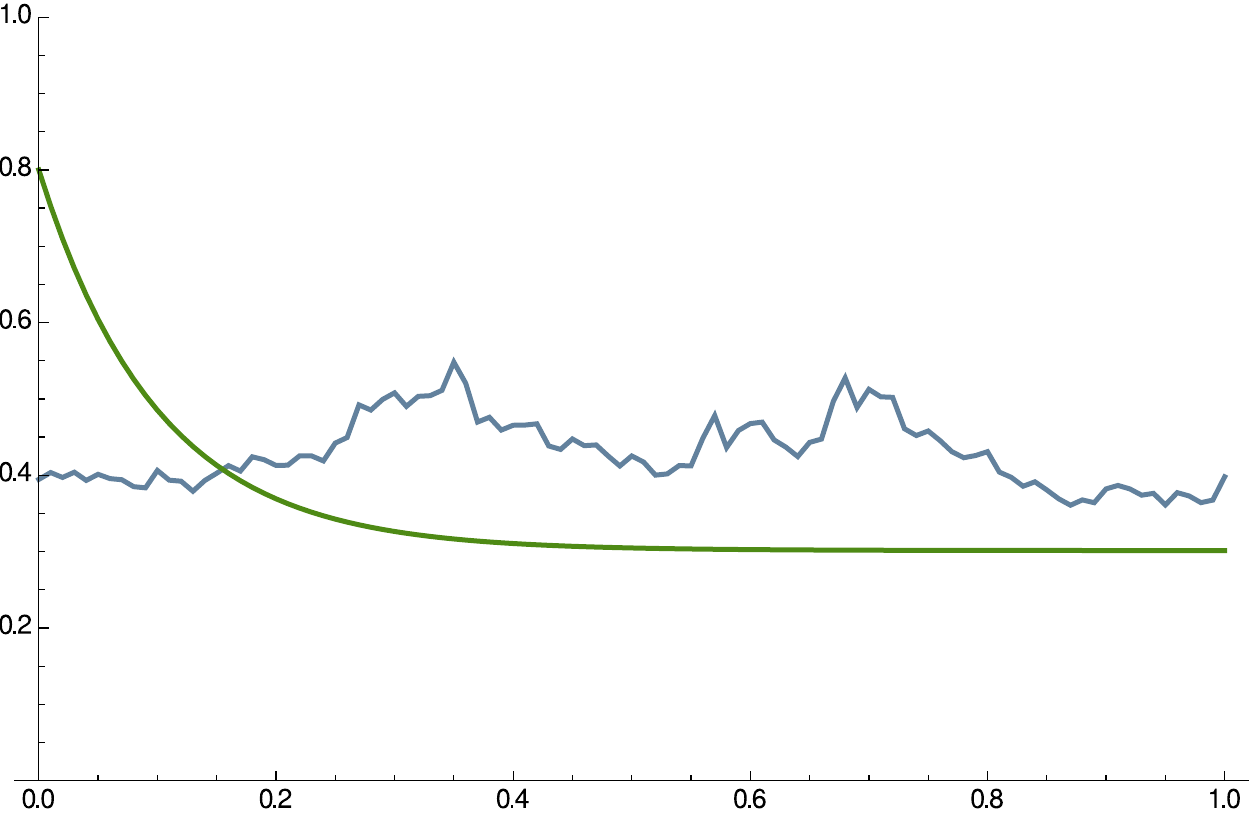}
	\caption{The smooth control $\a_{\mathrm{opt}}(t)$ (in green) vs. the path-wise omniscent optimal control $\a_{\mathrm{omn}}(t)$ (in blue)}
	\label{fig:controls}
\end{figure}
\begin{figure}[!h]
	\centering
	\includegraphics[scale=0.6]{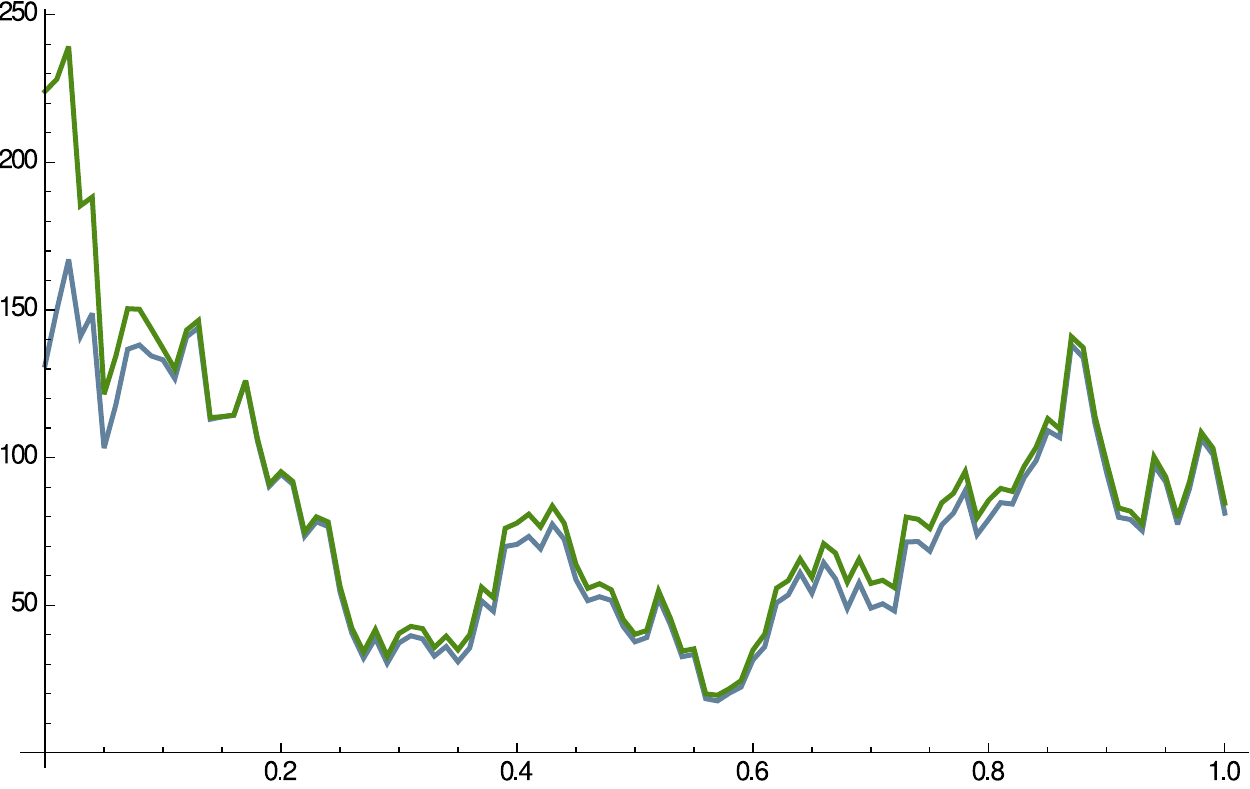}
	\caption{The total heat loss evolutions corresponding to the two controls visualized in Figure~\ref{fig:controls}}
	\label{fig:heatpath}
\end{figure}


\FloatBarrier
\section{Conclusions}
\label{sec6}
In this paper we consider a stochastic model for energy networks, which we model as weighted graphs with random sources and sinks. We analyzed the impact of storage devices using the total heat loss as performance metric and explicit results have been obtained in the case of one and two batteries, this latter also in a dynamic context. We leave the extension to scenarios with an arbitrary number of batteries for future work.


\section{Appendix}
\begin{lem}[Matrix $L^+$ in terms of effective resistances {\cite[Theorem 4.8]{BCEG09}}]\label{lem:lp}
\begin{align}
	\hspace{-0.2cm} L^+_{i,j} &= - \frac{1}{2} \Big ( R_{i,j} - \frac{1}{n} \sum_{k=1}^n (R_{i,k} + R_{j,k}) + \frac{1}{n^2} \sum_{k,l=1}^n R_{k,l} \Big ) \nonumber\\
	&=  \frac{1}{2n} \sum_{k=1}^n (R_{i,k} + R_{j,k}-R_{i,j}) - \frac{1}{n^2} Kf(G). \label{eq:lp}
\end{align}
\end{lem}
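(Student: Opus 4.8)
The plan is to invert the standard relation between effective resistances and the entries of $L^+$. Since $R_{i,j} = (\mathbf{e}_i - \mathbf{e}_j)^T L^+ (\mathbf{e}_i - \mathbf{e}_j)$ and $L^+$ is symmetric, expanding the quadratic form yields the identity
\[
	R_{i,j} = L^+_{i,i} + L^+_{j,j} - 2 L^+_{i,j},
\]
which I would take as the starting point. Solving for the off-diagonal entry gives $L^+_{i,j} = \frac{1}{2}(L^+_{i,i} + L^+_{j,j} - R_{i,j})$, so the whole task reduces to expressing the \emph{diagonal} entries $L^+_{i,i}$ in terms of effective resistances.

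The key structural fact I would invoke is that $L^+ \mathbf{1} = 0$. This holds because $G$ is connected, so $\ker L = \mathrm{span}(\mathbf{1})$, and the range of the Moore-Penrose pseudoinverse equals $(\ker L)^\perp = \mathbf{1}^\perp$; hence every row of $L^+$ sums to zero. Summing the displayed identity over $j=1,\dots,n$ and using $\sum_j L^+_{i,j}=0$ then gives $n L^+_{i,i} + \tr{L^+} - \sum_{k=1}^n R_{i,k} = 0$, that is, $L^+_{i,i} = \frac{1}{n}\big(\sum_k R_{i,k} - \tr{L^+}\big)$.

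To pin down the constant $\tr{L^+}$, I would sum the identity over both indices: since $\sum_{i,j} L^+_{i,j}=0$, one obtains $\sum_{i,j} R_{i,j} = 2n\,\tr{L^+}$, which recovers $\tr{L^+} = Kf(G)/n = \frac{1}{2n}\sum_{k,l}R_{k,l}$, consistent with the identity $Kf(G)=n\,\tr{L^+}$ stated earlier. Substituting both the diagonal expression and this value of $\tr{L^+}$ back into $L^+_{i,j} = \frac{1}{2}(L^+_{i,i}+L^+_{j,j}-R_{i,j})$ and collecting terms produces~\eqref{eq:lp}; the two displayed forms differ only by rewriting $\frac{1}{2n}\sum_k(R_{i,k}+R_{j,k}) - \frac{1}{2}R_{i,j}$ as $\frac{1}{2n}\sum_k(R_{i,k}+R_{j,k}-R_{i,j})$ and $\frac{1}{2n^2}\sum_{k,l}R_{k,l}$ as $\frac{1}{n^2}Kf(G)$.

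The only genuinely delicate step is the second one: justifying $L^+\mathbf{1}=0$ rigorously from the defining properties of the pseudoinverse together with the connectivity of $G$. Everything else is bookkeeping — expanding a quadratic form and carrying out two summations — so I expect no obstacle there.
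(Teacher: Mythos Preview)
Your argument is correct. The identity $R_{i,j}=L^+_{i,i}+L^+_{j,j}-2L^+_{i,j}$ together with $L^+\mathbf{1}=0$ is exactly the right pair of ingredients, and the two summations you perform recover the diagonal entries and the trace in terms of $R$ precisely as needed; the final substitution checks out line by line.

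There is nothing to compare against in the paper itself: this lemma is stated with a citation to~\cite[Theorem~4.8]{BCEG09} and is not proved in the Appendix. Your proposal therefore supplies a complete, self-contained proof where the paper simply imports the result. The one step you flag as ``delicate''---that $L^+\mathbf{1}=0$---is standard: for a connected graph, $\ker L=\mathrm{span}(\mathbf{1})$, and the Moore--Penrose pseudoinverse satisfies $\mathrm{ran}\,L^+=(\ker L)^\perp$ and $\ker L^+=\ker L^T=\ker L$, either of which gives the conclusion immediately. No gap.
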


\begin{lem}[Matrix $\ls$ in terms of effective resistances] \label{lem:L1}
\[
	(\ls)_{i,j} = \frac{1}{2}(R_{n,i}+R_{n,j} - R_{i,j}).
\]
\end{lem}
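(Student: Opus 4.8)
The plan is to unwind the definition $\ls = \prm^T L^+ \prm$ entrywise and convert everything into effective resistances. First I would read off the columns of $\prm$: since its top block is $I_{n-1}$ and its bottom row is $-\mathbf{1}$, the $k$-th column contributes a $1$ in coordinate $k$ and a $-1$ in coordinate $n$, so it equals $\mathbf{e}_k - \mathbf{e}_n$, where $\mathbf{e}_k \in \R^{\nv}$ is the $k$-th standard basis vector. Consequently the $(i,j)$ entry of $\ls$ is the bilinear expression
\[
	(\ls)_{i,j} = (\mathbf{e}_i - \mathbf{e}_n)^T L^+ (\mathbf{e}_j - \mathbf{e}_n).
\]

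The key step is then a polarization argument. Because $L^+$ is symmetric, $B(u,v) := u^T L^+ v$ is a symmetric bilinear form, and by the definition of effective resistance its associated quadratic form on difference vectors is exactly $B(\mathbf{e}_a - \mathbf{e}_b, \mathbf{e}_a - \mathbf{e}_b) = R_{a,b}$. Applying the identity $B(u,v) = \frac{1}{2}\big(B(u,u) + B(v,v) - B(u-v,u-v)\big)$ with $u = \mathbf{e}_i - \mathbf{e}_n$ and $v = \mathbf{e}_j - \mathbf{e}_n$, so that $u - v = \mathbf{e}_i - \mathbf{e}_j$, gives
\[
	(\ls)_{i,j} = \frac{1}{2}\big(R_{i,n} + R_{j,n} - R_{i,j}\big) = \frac{1}{2}\big(R_{n,i} + R_{n,j} - R_{i,j}\big),
\]
using the symmetry $R_{i,n} = R_{n,i}$. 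This is precisely the claimed formula, so no further work is needed.

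As an alternative I could avoid polarization and expand the bilinear form directly as $(\ls)_{i,j} = L^+_{i,j} - L^+_{i,n} - L^+_{n,j} + L^+_{n,n}$, then substitute $L^+_{a,b} = \frac{1}{2}(L^+_{a,a} + L^+_{b,b} - R_{a,b})$ (a rearrangement of $R_{a,b} = L^+_{a,a} + L^+_{b,b} - 2L^+_{a,b}$); the three diagonal entries $L^+_{i,i}, L^+_{j,j}, L^+_{n,n}$ then cancel and the resistance terms combine to the same answer. One could also feed in the fully explicit expression for $L^+_{i,j}$ from Lemma~\ref{lem:lp}, but that drags along the row sums $\sum_k R_{i,k}$ and the $Kf(G)$ term, which must then be verified to vanish, so it is strictly more bookkeeping. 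The computation is routine and presents no real obstacle; the only care needed is the sign/orientation in identifying the columns of $\prm$ and keeping the three effective resistances in the right places, which is exactly what the polarization route streamlines.
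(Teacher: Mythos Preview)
Your proof is correct and genuinely more direct than the paper's. The paper proceeds via Lemma~\ref{lem:lp}: it introduces the auxiliary matrix $B := L^+ + \frac{1}{n^2}Kf(G)\,\mathbf{J}$, notes that $\prm^T \mathbf{J}\,\prm = \mathbf{0}$ (zero column sums), so $\prm^T L^+ \prm = \prm^T B\,\prm$, and then expands $(\ls)_{i,j} = B_{i,j} - B_{n,i} - B_{n,j} + B_{n,n}$ using the formula $B_{i,j} = \frac{1}{2n}\sum_k (R_{i,k} + R_{j,k} - R_{i,j})$, after which the row sums $\sum_k R_{\cdot,k}$ cancel by hand. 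In other words, the paper takes exactly the route you flagged as ``strictly more bookkeeping.'' Your polarization argument sidesteps all of this: once the columns of $\prm$ are identified as $\mathbf{e}_k - \mathbf{e}_n$, the definition $R_{a,b} = (\mathbf{e}_a - \mathbf{e}_b)^T L^+ (\mathbf{e}_a - \mathbf{e}_b)$ plus the polarization identity for the symmetric bilinear form $u^T L^+ v$ gives the answer in one line, with no appeal to Lemma~\ref{lem:lp} and no auxiliary matrix. The trade-off is that the paper's $B$-matrix machinery is reused verbatim in the proof of Lemma~\ref{lem:L2}, where the columns of $\prmt(\a)$ are convex combinations $\mathbf{e}_k - \a\,\mathbf{e}_{n-1} - (1-\a)\,\mathbf{e}_n$ rather than simple differences, so the one-line polarization no longer applies and one falls back on something closer to your ``alternative'' expansion anyway.
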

\begin{proof}
Denote by $\mathbf{J}$ the a matrix with all entries equal to $1$ and by $\mathbf{0}$ that with all entries equal to $0$. Since every column of $\prm$ has zero sum, it follows that
\[
	\prm^T \mathbf{J} \prm = \mathbf{0}.
\]
Thus, if we denote $B:=L^+ +\frac{1}{n^2} Kf(G)\mathbf{J}$, then $B$ is a symmetric matrix and
\[
	\prm^T L^+  \prm  = \prm^T B \prm.
\]
Furthermore, in view of~\eqref{eq:lp},
\[
	B_{i,j} =  \frac{1}{2n} \sum_{k=1}^n (R_{i,k} + R_{j,k}-R_{i,j}),
\]
and thus
\begin{align*}
	& (\ls)_{i,j} \nonumber = \text{row}_i (\prm^T(\a)) \, B \, \text{column}_j(\prm(\a)) \nonumber \\
	& \quad  = B_{i,j} - B_{n,i} - B_{n,j} + B_{n,n}  \nonumber \\
	& \quad =  \frac{1}{2n} \sum_{k=1}^n (R_{j,k}-R_{i,j} - R_{n,k} + R_{n,i})  \nonumber \\
	& \qquad + \frac{1}{2n} \sum_{k=1}^n ( -R_{j,k} - R_{n,k} + R_{n,j}+2 R_{n,k}) \nonumber \\
	& \quad =  \frac{1}{2n} \sum_{k=1}^n (R_{n,i} + R_{n,j} - R_{i,j}) \nonumber \\
	& \quad =  \frac{1}{2} (R_{n,i} + R_{n,j} - R_{i,j}). \nonumber  \qquad \qquad \qquad \qquad  \qquad \qedhere
\end{align*}
\end{proof}

\begin{proof}[Proof of Theorem~\ref{thm:onebattery}]
For notational compactness, we suppress in this proof the dependence on $t$ of the various parameters. From Lemma~\ref{lem:L1}, it immediately follows that $(\ls)_{i,i}=R_{n,i}$, that, together with the assumption that $\bm{\Sigma}=\mathrm{diag}(\s^2_1,\dots, \s^2_{n-1})$, yields%
\[
	\tr{\bm{\Sigma} \ls}  = \sum_{i=1}^{n-1} \s_i^2 R_{n,i}.
\]
Thanks to Lemma~\ref{lem:L1} and the properties of the matrix $R$ 
\begin{align*}
	\hspace{-0.1cm}\bm{\mu}^T \ls \bm{\mu} &= \frac{1}{2} \sum_{i=1}^{n-1}  \sum_{j=1}^{n-1} \mu_i \mu_j (R_{n,i}+R_{n,j} - R_{i,j})\\
	&= \frac{1}{2} \sum_{i=1}^{n-1}  \sum_{j=1}^{n-1} \mu_i \mu_j (R_{n,i}+R_{n,j})\\
	& \quad - \frac{1}{2} \sum_{i=1}^{n-1}  \sum_{j=1}^{n-1} \mu_i \mu_j R_{i,j}\\
	&= \sum_{i=1}^{n-1}  \sum_{j=1}^{n-1} \mu_i \mu_j R_{n,j} - \sum_{i=1}^{n-1}  \sum_{j=i+1}^{n-1} \mu_i \mu_j R_{i,j}\\
	&= \Big( \sum_{i=1}^{n-1} \mu_i  \Big) \Big( \sum_{j=1}^{n-1} \mu_j R_{n,j} \Big) - \sum_{\substack{i,j=1\\i <j}}^{n-1} \mu_i \mu_j R_{i,j},
\end{align*}
where the second last passage follows from the identity 
\begin{equation}
	\sum_{i=1}^{n-1}  \sum_{j=1}^{n-1} b_i b_j (a_i+a_j) = 2 \sum_{i=1}^{n-1}  \sum_{j=1}^{n-1} b_i b_j a_j
	\label{eq:funsum}
\end{equation}
valid for any $a_1,\dots,a_{n-1}$ and $b_1,\dots,b_{n-1}$ in $\R$.
\end{proof}

\begin{lem}[Matrix $\lst(\a)$ in terms of effective resistances] \label{lem:L2}
\begin{align*}
	\hspace{-0.1cm} (\lst(\a))_{i,j} & = -\a(1-\a) R_{n-1,n} + \a \frac{R_{n-1,i} + R_{n-1,j}}{2} \\
	& \quad + (1-\a) \frac{R_{n,i} + R_{n,j}}{2} - \frac{R_{i,j}}{2}.
\end{align*}
\end{lem}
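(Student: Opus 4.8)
The plan is to mirror the proof of Lemma~\ref{lem:L1}, exploiting the fact that each column of $\prmt(\a)$ again has zero sum. Indeed, the $i$-th column of $\prmt(\a)$ equals $\mathbf{e}_i - \a\,\mathbf{e}_{n-1} - (1-\a)\,\mathbf{e}_n$, whose entries sum to $1 - \a - (1-\a) = 0$. Consequently $\prmt^T(\a)\,\mathbf{J}\,\prmt(\a) = \mathbf{0}$, so that, setting $B := L^+ + \frac{1}{n^2}Kf(G)\,\mathbf{J}$, I may replace $L^+$ by the symmetric matrix $B$ and write $\lst(\a) = \prmt^T(\a)\,L^+\,\prmt(\a) = \prmt^T(\a)\,B\,\prmt(\a)$.

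The key simplification I would isolate first is the following. Writing $\phi_p := \frac{1}{2n}\sum_{k=1}^n R_{p,k}$, Lemma~\ref{lem:lp} gives $B_{p,q} = \phi_p + \phi_q - \frac{1}{2}R_{p,q}$. Hence for any two coefficient vectors $\mathbf{u},\mathbf{v}\in\R^n$ with $\mathbf{1}^T\mathbf{u} = \mathbf{1}^T\mathbf{v} = 0$ one has
\[
	\mathbf{u}^T B \mathbf{v} = \Big(\sum_p u_p \phi_p\Big)\Big(\sum_q v_q\Big) + \Big(\sum_p u_p\Big)\Big(\sum_q v_q \phi_q\Big) - \frac{1}{2}\sum_{p,q} u_p v_q R_{p,q} = -\frac{1}{2}\sum_{p,q} u_p v_q R_{p,q},
\]
since the first two terms vanish because of the zero-sum property. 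Applying this with $\mathbf{u} = \mathbf{e}_i - \a\mathbf{e}_{n-1} - (1-\a)\mathbf{e}_n$ and $\mathbf{v} = \mathbf{e}_j - \a\mathbf{e}_{n-1} - (1-\a)\mathbf{e}_n$ reduces $(\lst(\a))_{i,j}$ to a finite double sum of effective resistances.

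Finally I would expand that double sum over the at most nine index pairs drawn from $\{i,n-1,n\}\times\{j,n-1,n\}$, using $R_{p,p}=0$ to kill the diagonal contributions and the symmetry $R_{n-1,n}=R_{n,n-1}$ to merge the two cross terms into $2\a(1-\a)R_{n-1,n}$. Collecting the surviving terms yields
\[
	(\lst(\a))_{i,j} = -\frac{1}{2}\Big( R_{i,j} - \a(R_{n-1,i}+R_{n-1,j}) - (1-\a)(R_{n,i}+R_{n,j}) + 2\a(1-\a)R_{n-1,n}\Big),
\]
which rearranges to the claimed expression. There is no genuine obstacle here: the argument is routine bookkeeping, and the only point requiring a little care is the vanishing of the $\phi$-terms in the bilinear form (which is exactly where the zero-column-sum property of $\prmt(\a)$ enters), together with tracking signs when expanding the nine cross terms.
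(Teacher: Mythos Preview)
Your proof is correct and follows the same overall strategy as the paper's (replace $L^+$ by $B:=L^+ + \tfrac{1}{n^2}Kf(G)\,\mathbf{J}$ using the zero-column-sum property of $\prmt(\a)$, then compute the bilinear form). Where the two diverge is in the bookkeeping: the paper expands $(\lst(\a))_{i,j}$ as a nine-term linear combination of $B$-entries, introduces the auxiliary row sums $S_{n-1}=\sum_k R_{k,n-1}$ and $S_n=\sum_k R_{k,n}$, and then checks that these cancel between the $i,j$-independent and $i,j$-dependent parts. Your decomposition $B_{p,q}=\phi_p+\phi_q-\tfrac12 R_{p,q}$ and the observation that $\mathbf{u}^T B\mathbf{v}=-\tfrac12\sum_{p,q}u_p v_q R_{p,q}$ for any zero-sum $\mathbf{u},\mathbf{v}$ achieves this cancellation in one stroke and leaves only the nine resistance terms to collect. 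This is a cleaner and more conceptual route---it explains \emph{why} the row-sum terms must disappear rather than verifying it by hand---and it would equally well handle the case of more than two batteries without additional effort.
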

\begin{proof}
For every $\a \in \R$, $\prmt^T(\a)\mathbf{1}=1-\a-(1-\a)=0$ and, hence,
\[
	\prmt^T(\a) \mathbf{J} \prmt(\a) = \mathbf{0}.
\]
Hence, we can argue like in the proof of Lemma~\ref{lem:L1} and consider the symmetric matrix $B:=L^+ +\frac{1}{n^2} Kf(G)\mathbf{J}$, which is such that
\[
	\prmt^T(\a) L^+  \prmt(\a)  = \prmt^T(\a) B \prmt(\a).
\]
From~\eqref{eq:lp} it follows that
\begin{align}
\label{eq:Aij}
	B_{i,j} & =\frac{1}{2n} \sum_{k=1}^n (R_{i,k} + R_{k,j}-R_{i,j}) \nonumber \\
	& = -\frac{R_{i,j}}{2} + \frac{1}{2n} \sum_{k=1}^n (R_{i,k} + R_{k,j}).
\end{align}
Using the auxiliary matrix $B$, we can write
\begin{align}
	& (\lst(\a))_{i,j} \nonumber = \text{row}_i (\prmt^T(\a)) \, B \, \text{column}_j(\prmt(\a)) \nonumber \\
	& \quad = B_{i,j} - \a B_{i,n-1} - (1-\a) B_{i,n}  \nonumber \\
	& \qquad - \a \left (B_{n-1,j} - \a B_{n-1,n-1} - (1-\a) B_{n-1,n}\right ) \nonumber \\
	& \qquad -(1-\a)\left  (B_{n,j} - \a B_{n,n-1} - (1-\a) B_{n,n}\right )
	\label{eq:terms}
\end{align}
Denote $S_{n-1}:= \sum_{i=1}^{n-2} R_{i,n-1}$ and $S_{n}:= \sum_{i=1}^{n-2} R_{i,n}$. Consider first the terms in~\eqref{eq:terms} that do not depend on $i$ and $j$: using identity~\eqref{eq:Aij} we get
\begin{align}
	\a^2 &B_{n-1,n-1} + (1-\a)^2 B_{n,n}+2\a(1-\a) B_{n-1,n} =\nonumber \\
	&= -\a(1-\a) R_{n-1,n} + \frac{\a(1-\a)}{n} (S_{n-1} + S_{n}) \nonumber\\
	& \quad + \frac{\a^2}{n} S_{n-1} + \frac{(1-\a)^2}{n} S_{n},\nonumber\\
	&= -\a(1-\a) R_{n-1,n} +\frac{1}{n}\left ( \a S_{n-1} + (1-\a) S_{n}  \right ), \label{eq:firstterms}
\end{align}
Consider now the terms in~\eqref{eq:terms} that depend on $i$ and $j$. By virtue of~\eqref{eq:Aij}, we derive
\begin{align}
	& B_{i,j} - \a (B_{i,n-1} + B_{n-1,j}) - (1-\a) (B_{i,n} + B_{n,j}) =\nonumber \\
	& \quad \, = \a \frac{R_{n-1,i} + R_{n-1,j}}{2} +(1-\a) \frac{R_{n,i} + R_{n,j}}{2} \nonumber \\
	& \qquad \, - \frac{1}{n}\left (\a S_{n-1} + (1-\a)  S_{n} \right )  -\frac{R_{i,j}}{2}. \label{eq:secondterms}
\end{align}
The proof is completed by combining~\eqref{eq:terms}-\eqref{eq:secondterms}.
\end{proof}

\begin{proof}[Proof of Theorem~\ref{thm:heat2batteries}]
Also in this proof, we suppress the dependence on $t$ for notational compactness. Since $\bm{\Sigma}=\mathrm{diag}(\s^2_1,\dots, \s^2_{n-2})$, using Lemma~\ref{lem:L2} and the properties of matrix $R$ we get that 
\begin{align*}
	\tr{\lst(\a)}  &= \sum_{i=1}^{n-2} \s_i^2 ( \a^2 R_{n-1,n}  + \a  \D_i + R_{n,i}),
\end{align*}
with $\D_i=R_{n-1,i} -R_{n,i}-R_{n-1,n}$ for $i=1,\dots,n-2$. Moreover, again by virtue of Lemma~\ref{lem:L2},
\begin{align*}
	\hspace{-0.15cm} \bm{\mu}^T& \lst(\a) \bm{\mu}= \sum_{i=1}^{n-2} \sum_{j=1}^{n-2} \mu_i \mu_j (\lst(\a))_{i,j}\\
	& \, =  \a^2 \sum_{i=1}^{n-2} \sum_{j=1}^{n-2} \mu_i \mu_j R_{n-1,n} \\
	& \quad + \frac{\a}{2} \sum_{i=1}^{n-2} \sum_{j=1}^{n-2} \mu_i \mu_j (\D_i +\D_j)\\
	& \quad + \frac{1}{2} \sum_{i=1}^{n-2} \sum_{j=1}^{n-2} \mu_i \mu_j (R_{n,i}+R_{n,j})\\
	& \quad - \frac{1}{2} \sum_{i=1}^{n-2} \sum_{j=1}^{n-2} \mu_i \mu_j R_{i,j}\\
	& \, = \a^2 R_{n-1,n} \Big (\sum_{i=1}^{n-2} \mu_i \Big )^2 +\\
	& \quad  +\a \Big (\sum_{i=1}^{n-2} \mu_i(t) \Big ) \Big (\sum_{i=1}^{n-2} \mu_i(t) \D_i \Big ) \\
	& \quad  + \Big (\sum_{i=1}^{n-2} \mu_i(t) \Big ) \Big (\sum_{i=1}^{n-2} \mu_i(t) R_{n,i} \Big ) - \sum_{\substack{i,j=1\\i<j}}^{n-2} \mu_i \mu_j R_{i,j}.
\end{align*}
where in the last step we used identity~\eqref{eq:funsum}.
\end{proof}



\end{document}